\newcommand{\remove}[1] {}
\newtheorem{fact}{\bf Fact}
\renewcommand{\epsilon}{\varepsilon}
\newlength{\pgmtab}
\newcommand\blfootnote[1]{%
  \begingroup
  \renewcommand\thefootnote{}\footnote{#1}%
  \addtocounter{footnote}{-1}%
  \endgroup
}
\begin{document}

\title{Optimizing the Social Cost  of  Congestion Games  \\ by Imposing Variable Delays}

\author{Josep D{\'i}az\inst{1} \and Ioannis Giotis\inst{1,2} \and Lefteris Kirousis\inst{3} \and Yiannis Mourtos\inst{2} \and\\ Maria J. Serna\inst{1}}

\institute{Departament de Llenguatges i Sistemes Inform\`{a}tics\\ Universitat Polit\`{e}cnica de Catalunya, Barcelona
\and 
Department of Management Science and Technology \\ Athens University of Economics and Business, Greece
\and 
Department of Mathematics,  National \& Kapodistrian University of Athens, Greece 
\\ and Computer Technology Institute \& Press  ``Diophantus'', Patras, Greece}

\maketitle
\begin{abstract} We describe a new coordination mechanism for non-atomic congestion games that leads to a  (selfish) social cost  which is arbitrarily close to the non-selfish optimal. This mechanism does not incur  any additional extra cost, like tolls,  which are usually differentiated from  the social cost as expressed in terms of delays only.  

\protect\blfootnote{\scriptsize Contact: \email{\{diaz,igiotis,mjserna\}@lsi.upc.edu,lkirousis@math.uoa.gr,mourtos@aueb.gr}}\blfootnote{\scriptsize Josep D{\'i}az, Maria J. Serna and Ioannis Giotis supported by the CICYT project TIN-2007-66523 (FORMALISM). This research has been co-financed by the European Union (European Social Fund – ESF) and Greek national funds through the Operational Program ``Education and Lifelong Learning" of the National Strategic Reference Framework (NSRF) - Research Funding Program: Thales. Investing in knowledge society through the European Social Fund.}
\end{abstract}

\section{Introduction}
Selfish behavior is one of the primary reasons many systems with multiple agents deviate from desirable outcomes. Allowing players to prioritize solely their own benefit can lead to social inefficiency, even in outcomes where no one is better off, compared to an optimal solution yet the social welfare is greatly diminished.

This type of behavior has been analyzed in various contexts and has often been verified in practice. A key such area is transportation and network routing where selfish selection among possible routes can lead to congestion with accompanying economical and environmental issues.

There exists a great amount of literature trying to address this problem both on the basis of theoretical analysis and in practice. A standard and reasonable modeling assumption is that all users choosing to use a particular road experience the same amount of latency. Then, the outcome is typically evaluated against the optimal social welfare outcome which minimizes the average latency the users experience.

Various approaches have been proposed to steer the selfishly constructed outcome towards optimal social welfare. The main idea is usually to incentivize the users to alter their selections to better ones, typically through the use of tolls or similar measures.

We propose an alternative approach that alters the way users experience latency and can offer significant improvements on average latency. Instead of all users experiencing the same  latency, as if everyone rushing to use the road at the same time, we propose to implement variable  latencies through a prioritization scheme. Some users will experience smaller latency than before, while others will experience longer latencies. We show that our system, if users behave selfishly as expected, achieves approximately the optimal social welfare. 

It is important to note that our system's average latency on each road closely matches the road's average latency without the system in place, hence our system falls under the notion of coordination mechanisms and we have not ``cheated'' by decreasing  latencies or imposing tolls; we are simply distributing the resource differently. This holds on any instance, not just equilibrium settings, which means that even non stable situations we do not get worse performance. Furthermore, we do not need to know the \emph{demand} in advance, i.e. our system delivers the social optimum for all possible total amounts of traffic. Our only requirement is that the latency induced on each road is a non-negative, non-decreasing, continuously differentiable and convex function of the traffic.

We believe our system has basis for practical implementation. For example, some countries have already implemented metered highway entrance ramps which can vary the latency of incoming drivers. Traffic signals can also be used in an urban environment to implement this aspect of our mechanism. We deliberately leave the prioritization scheme generic to allow for different such approaches with our only requirement being that users choosing to alter their current selection are forced to experience maximal  latency in their new selection, a reasonable requirement as typically someone that alters their selection in a running system ends up at the end of a queue.

We examine our system in the generic scheme of congestion games. As such, it can have other application besides traffic routing. One interesting application could be in the context of job scheduling. Again, in a typical model, each user choosing to use a particular resource experiences the same latency,  for example computing jobs running in parallel on a computer. By prioritizing jobs according to our proposed mechanism such that some jobs complete faster and some slower than before we can achieve optimal average job completion times under selfish behavior. We note that this can be easily implemented by an administrator using system priorities. 

\section{Related work}
The fact that selfish behavior can lead to inefficiency has been long studied in the context of transportation theory~\cite{pigou,wardrop1952road}. More recently, Koutsoupias and Papadimitriou introduced the \emph{Price of Anarchy} as a measure of this inefficiency~\cite{DBLP:journals/csr/KoutsoupiasP09,DBLP:conf/stoc/Papadimitriou01}. Exploration of this metric in the context of selfish routing was then greatly progressed by Roughgarden and Tardos~\cite{DBLP:journals/jacm/RoughgardenT02,DBLP:books/daglib/0012676} which bounded the price of anarchy for different classes of latency functions.

Naturally, ways to improve inefficient outcomes were investigated, with a prime example being the imposement of \emph{tolls}~\cite{DBLP:conf/stoc/ColeDR03,fleischer2004tolls,karakostas2004edge}. While this approach achieves optimal social welfare regarding latencies, it introduces a cost separation to the players as the tolls' cost is affecting behavior but is not accounted for in the objective function.

Coordination mechanisms were introduced by Christodoulou et al.~\cite{DBLP:journals/tcs/ChristodoulouKN09} as a way to ``shape'' latency functions to steer the selfishly dictated outcome towards greater social welfare. There are two main  restrictions in a coordination mechanism as defined in \cite{DBLP:journals/tcs/ChristodoulouKN09}, first the latency on each resource is not reduced and secondly the benchmark optimal social welfare is still the one without any additional latencies the coordination mechanism may impose. It has been recently shown that indeed such mechanisms can positively affect social welfare~\cite{christodoulou2011improving}. In this work, the non-decreasing of the latency is respected not for every unit but  on the average.

\section{Model}
We define a congestion game $(E,l, {\cal S}, d)$ in the generic sense but using network routing terminology for convenience. First, a set $E$ of edges with an associated non-negative, non-decreasing, continuously differentiable and  convex $l_e(x_e)$ \emph{latency} function for each edge. We note that these assumptions are typical for latency functions. We have $n$ player types $1,2,\ldots,k$, and for each player type $i$ there is a finite multiset ${\cal S}_i$ of subsets of $E$, called the strategy set of players of type $i$. A particular element $S\in{\cal S}_i$ is a single strategy of player type $i$. We also have a demand $d_i$ for each player type $i$. 

We assume that player types are non-atomic, i.e. they consist of infinitely small users or infinitely divisible traffic. Let $x_i^S$ denote a nonnegative real representing the amount of players of type $i$ that use strategy $S$ and $x_i$ the vector for the strategy set ${\cal S}_i$. The vector $x$ for all $x_i$'s is called a \emph{flow} if for all player types $i$, $\sum_{S\in{\cal S}_i}x_i^S=d_i$. Overloading notation, we define the amount of players of type $i$ having flow through edge $e$, \[ x_e^i=\sum_{\{S:S\in {\cal S}_i,e\in S\}} x_i^S, \] and the total flow through an edge $e$ as\[ x_e=\sum_{i=1\ldots n} x_e^i.\] 

In related literature, the \emph{cost} induced to each player type $i$ by a flow $x$ is the sum of latencies of edges used by all players of the specific type, $c_i(x)=\sum_{e\in E} l_e(x_e)\cdot x_e^i$. The social cost is \[ C(x)=\sum_{e\in E} l_e(x_e)\cdot x_e.\]
For reference, we also note the Wardrop equilibrium in our setting.
\begin{definition}
We say that the flow vector $x$ is in Wardrop equilibrium if for all players' type $i$ and for any pairs of strategies (paths) $S_1, S_2 \in{\cal S}_i$, if $x_i^{S_1} >0$ then the following holds:\begin{equation}
 \sum_{e \in S_1} l_e(x_e) \leq \sum_{e \in S_2} l_e(x_e).  \end{equation} 
\end{definition} 

\section{Variable delay mechanism}
Given a congestion game $(E,l, {\cal S}, d)$ with  non-negative, non-decreasing, continuously differentiable and convex latency functions, we propose a coordination mechanism which induces cost to players  as follows:  
 
Let $N=(N_e)_{e\in E}$ be a sequence of positive integers indexed by the set of elements (edges) $E$  to be called a {\em batch system}. A positive integer $b \leq N_e$ is referred to as a batch index at edge $e$. The integer $N_e$ is the number of batches the flow at each $e$ will be divided.    Specifically at each edge $e$, the flow of any player type through a path $S$ such that $e \in S$, and consequently the total flow $x_e$ through $e$,   are divided into $N_e$ equal batches. Now consider  the following functions, known as marginal-cost latency functions:
\begin{equation}\label{def:delay}
\hat{l}_e(x_e) = l_e(x_e)+l'_e(x_e)\cdot x_e,
\end{equation}
where $l'_e()$ is the derivative of $l_e()$.
The latency induced to players at an edge $e$ is not going to be equal to all. Instead, flow of any player type and through any path $S$ at  batch $b$ receives latency $\hat{l}_e((b/N_e)x_e)$ per unit. Players are interested in minimizing their own latency. Note that this assignment can be performed by any desired policy, for example, randomly, first-come first-serve, by priority lists etc. We will refer to the previous model of applying equal latency to all players as the latencies or original model. 

Since each batch $b$ receives latency $\hat{l}_e((b/N_e)x_e)$ per unit, we  define the  cost with respect to the batch system at an an edge with flow $x_e$  to be:
\begin{equation*}
\hat{c}_e(x_e) = (x_e/N_e) \sum_{b=1}^{N_e} \hat{l}_e((b/{N_e})x_e)
\end{equation*}
and the social cost with respect to the batch system  $$\hat{C}(x) = \sum_e \hat{c}_e(x_e).$$ 
Note that $\hat{c}_e(x_e)\geq \int_0^{x_e} d_e(z) dz=l_e(x_e)x_e = c_e(x_e)$ which is the cost at edge $e$ as before; we do not decrease the cost as per the coordination mechanisms' doctrine and as we shall see later, any cost increase can be made arbitrarily small. 
\begin{definition} \label{def:batchequi} We say that the flow vector $x$ is in equilibrium with respect to the batch system if for all players $i$ and for any pairs of strategies (paths) $S_1, S_2 \in{\cal S}_i$, if $x_i^{S_1} >0$, then  for any sequence of positive integers (batch indices) $(b_e)_{e \in E}$ such that $\forall e \ b_e \leq N_e$, the following holds:\begin{equation}
 \sum_{e \in S_1} \hat{l}_e((b_e/N_e)x_e) \leq \sum_{e \in S_2} \hat{l}_e(x_e).  \end{equation} 
\end{definition} 
Intuitively, if an atom changes path, then we assume that it gets to the last batch of every new edge of the new path, and that under this assumption, there is no strict gain in total per unit latency. Also, at an intuitive level, we assume that the atoms of the flow are indistinguishable, so it makes sense to consider all possible batch assignments along the edges of a path, i.e. even batch assignments where an early batch in an edge in the beginning of a path is followed  by a late batch  in the last edges of the path (if atoms were not indistinguishable, then there might arise issues with the implementation policy on the batch assignments).
We have the following:
\begin{lemma}\label{equiv}
The flow vector $x$ is in equilibrium with respect to the batch system iff it is in Wardrop equilibrium with respect to the marginal-cost latency functions $\hat{l}_e(x_e) = l_e(x_e)+l'_e(x_e)\cdot x_e$, i.e. iff for all players $i$ and for any pairs of strategies (paths) $S_1, S_2 \in{\cal S}_i$, if $x_i^{S_1} >0$, then  $\sum_{e \in S_1} \hat{l}_e(x_e) \leq \sum_{e \in S_2} \hat{l}_e(x_e).$\end{lemma}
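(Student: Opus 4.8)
The plan is to prove the two implications separately, with the entire argument resting on a single monotonicity observation: the marginal-cost latency $\hat{l}_e$ is itself non-decreasing on $[0,\infty)$. First I would establish this. Writing $\hat{l}_e(y)=l_e(y)+l'_e(y)\,y$ and taking $0\leq y_1\leq y_2$, I would split the difference $\hat{l}_e(y_2)-\hat{l}_e(y_1)$ into $\bigl(l_e(y_2)-l_e(y_1)\bigr)+\bigl(l'_e(y_2)y_2-l'_e(y_1)y_1\bigr)$. The first term is nonnegative because $l_e$ is non-decreasing. For the second, convexity of $l_e$ makes $l'_e$ non-decreasing, and non-negativity together with monotonicity of $l_e$ gives $l'_e\geq 0$; hence $l'_e(y_2)\geq l'_e(y_1)\geq 0$ and $y_2\geq y_1\geq 0$, so the product $l'_e(y)\,y$ is a product of two non-negative non-decreasing quantities and is itself non-decreasing. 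Note this uses only the stated hypotheses and does not require $l_e$ to be twice differentiable.

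Given monotonicity, the forward direction (batch equilibrium $\Rightarrow$ Wardrop equilibrium for $\hat{l}$) is immediate by specialization: in Definition \ref{def:batchequi} I would simply choose the batch indices $b_e=N_e$ for every $e\in S_1$. Since $(N_e/N_e)x_e=x_e$, the batch inequality collapses to $\sum_{e\in S_1}\hat{l}_e(x_e)\leq\sum_{e\in S_2}\hat{l}_e(x_e)$, which is exactly the Wardrop condition for the marginal-cost latencies.

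For the reverse direction (Wardrop $\Rightarrow$ batch), I would start from $\sum_{e\in S_1}\hat{l}_e(x_e)\leq\sum_{e\in S_2}\hat{l}_e(x_e)$ and verify the batch inequality for an \emph{arbitrary} admissible index sequence $(b_e)_{e\in E}$ with $b_e\leq N_e$. Since $(b_e/N_e)x_e\leq x_e$ and $\hat{l}_e$ is non-decreasing, each term satisfies $\hat{l}_e\bigl((b_e/N_e)x_e\bigr)\leq\hat{l}_e(x_e)$. Summing over $e\in S_1$ and chaining with the Wardrop inequality yields $\sum_{e\in S_1}\hat{l}_e\bigl((b_e/N_e)x_e\bigr)\leq\sum_{e\in S_1}\hat{l}_e(x_e)\leq\sum_{e\in S_2}\hat{l}_e(x_e)$, so Definition \ref{def:batchequi} holds for every choice of batches.

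The only real content is the monotonicity step; everything else is a specialization in one direction and a term-by-term bound in the other. I would therefore expect the monotonicity of $\hat{l}_e$ — in particular deriving it cleanly from convexity plus non-negativity without invoking a second derivative — to be the one place worth stating carefully, after which the equivalence follows with essentially no further work.
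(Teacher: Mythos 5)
Your proposal is correct and takes essentially the same approach as the paper: the forward direction by specializing the batch indices to $b_e=N_e$, and the reverse direction via the monotonicity of $\hat{l}_e$, which the paper attributes in a single line to convexity of $l_e$. Your explicit derivation of that monotonicity --- rightly noting that one also needs $l'_e\geq 0$ (from $l_e$ being non-decreasing), not convexity alone --- merely fills in a detail the paper leaves implicit.
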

\begin{proof}
Since $l_e()$ are convex, $\hat{l}_e$ are non-decreasing. Because the inequality in Definition~\ref{def:batchequi} holds for any selection of batch indices, and therefore also for $b_e =N_e$. The lemma easily follows. \hfill $\Box$
\end{proof}

We now state the following well known facts derived from the literature~\cite{DBLP:books/daglib/0012676}. 
\begin{fact}\label{exist}
When the latencies are non-negative, continuous and non-decreasing, there always exists at least one Wardrop equilibrium.\end{fact}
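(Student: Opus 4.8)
The plan is to prove existence via the Beckmann--McGuire--Winsten potential together with a compactness argument, identifying the Wardrop equilibria with the minimizers of a convex function over the flow polytope, so that no fixed-point machinery is needed. Concretely, I would define
\[
\Phi(x) = \sum_{e \in E} \int_0^{x_e} l_e(z)\, dz,
\]
viewed as a function of the path-flow variables $(x_i^S)$ through the linear edge-flow map $x_i^S \mapsto x_e$. The feasible set $K = \{(x_i^S) : x_i^S \ge 0,\ \sum_{S \in {\cal S}_i} x_i^S = d_i \text{ for all } i\}$ is a product of scaled simplices, hence a nonempty, compact, convex polytope (nonempty as each ${\cal S}_i$ is nonempty and each $d_i \ge 0$).

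First I would establish the regularity of $\Phi$. Although $l_e$ is only assumed continuous here (not differentiable), the map $t \mapsto \int_0^t l_e(z)\, dz$ is continuously differentiable with derivative $l_e(t)$ by the fundamental theorem of calculus, and it is convex because its derivative $l_e$ is non-decreasing. Composition with the linear edge-flow map preserves both properties, so $\Phi$ is a convex, continuously differentiable function on $K$ with $\partial \Phi / \partial x_i^S = \sum_{e \in S} l_e(x_e)$. Since $\Phi$ is continuous and $K$ is compact, $\Phi$ attains a global minimum at some $x^* \in K$; this is where existence will come from.

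The core step is to show that the minimizers of $\Phi$ are exactly the Wardrop equilibria. Because $\Phi$ is convex and $C^1$ over the convex set $K$, a point $x^*$ minimizes $\Phi$ iff it satisfies the variational inequality $\nabla \Phi(x^*) \cdot (x - x^*) \ge 0$ for all $x \in K$. Writing $\ell_S(x) = \sum_{e \in S} l_e(x_e)$ for the latency of a path $S$, this reads $\sum_i \sum_{S \in {\cal S}_i} \ell_S(x^*)\,(x_i^S - x_i^{*S}) \ge 0$. Since the constraints decouple across player types, the inequality holds iff, for each type $i$, the flow $x_i^*$ places positive mass only on paths of minimum latency $\min_{S'} \ell_{S'}(x^*)$, which is precisely the Wardrop condition of the Definition. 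Hence $x^*$ is a Wardrop equilibrium and existence follows.

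I expect the main obstacle to be the careful handling of the first-order conditions rather than existence per se: one must justify the variational-inequality characterization of convex minimizers and correctly unpack it, separately for each player type, into the statement that used paths carry minimal latency. The one point requiring attention on the hypotheses is that differentiability of $l_e$ is not assumed, which is exactly why the argument is routed through the integral $\int_0^t l_e(z)\,dz$, smooth and convex by continuity and monotonicity of $l_e$, rather than through $l_e$ itself.
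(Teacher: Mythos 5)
Your proof is correct and complete: minimizing the Beckmann--McGuire--Winsten potential $\Phi(x)=\sum_{e}\int_0^{x_e} l_e(z)\,dz$ over the compact, convex product of scaled simplices, and then unpacking the variational inequality $\nabla\Phi(x^*)\cdot(x-x^*)\ge 0$ type by type into the used-paths-have-minimal-latency condition, is exactly the right argument, and you correctly observe that only continuity and monotonicity of $l_e$ (not differentiability or convexity) are needed, routing the smoothness through the integral. The paper gives no proof of this Fact --- it cites it as well known from the literature (Roughgarden's book) --- and the canonical proof there is precisely this potential-function argument, so your proposal coincides with the proof the paper implicitly invokes.
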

\begin{fact} If $x$ and $x'$ are flow vectors in  Wardrop equilibrium then $l_e(x_e)x_e=l_e(x'_e)x'_e$ for all edges $e$. This also shows a unique social cost for all Wardrop equilibria.\end{fact}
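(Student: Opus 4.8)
The plan is to prove the per-edge identity through the variational (excess-cost) reformulation of the Wardrop condition together with monotonicity of the latencies, and then to read off the unique social cost by summation. First I would record the standard equivalent form of the equilibrium condition: a flow $x$ is in Wardrop equilibrium if and only if it satisfies the variational inequality $\sum_{e} l_e(x_e)\,(y_e - x_e) \ge 0$ for every feasible flow $y$, i.e.\ every $y$ meeting the same demands $d_i$. This follows from the Wardrop equilibrium condition by aggregating the per-path inequalities against the differences of the path-flows and collecting terms edge by edge; intuitively it says that no feasible redistribution of flow can strictly lower the total experienced latency.

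Next I would instantiate this inequality on the two equilibria against each other. Taking $y=x'$ in the inequality for $x$, and $y=x$ in the inequality for $x'$, and adding the two, the mixed terms combine to $\sum_e (x_e - x'_e)(l_e(x_e) - l_e(x'_e)) \le 0$. Because each $l_e$ is non-decreasing, every summand satisfies $(x_e - x'_e)(l_e(x_e) - l_e(x'_e)) \ge 0$. A nonpositive sum of nonnegative terms forces each term to vanish, so for every edge $e$ we obtain $(x_e - x'_e)(l_e(x_e) - l_e(x'_e)) = 0$. Hence on each edge either the flows already agree, $x_e = x'_e$, or the latencies agree, $l_e(x_e) = l_e(x'_e)$.

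The final step is to upgrade this dichotomy to the product identity $l_e(x_e)x_e = l_e(x'_e)x'_e$, and then to sum it to conclude $C(x)=C(x')$, which is the unique social cost asserted in the statement. On every edge of the first kind the product identity is immediate. The edges of the second kind are where the delicate work lies: there $l_e(x_e)=l_e(x'_e)$ while the flows may differ, and since $l_e$ is non-decreasing this common value is attained throughout the interval between $x_e$ and $x'_e$, i.e.\ $l_e$ is flat on that range. I expect this flat case to be the main obstacle, since the product $l_e(\cdot)\,(\cdot)$ is the quantity one is trying to pin down precisely where the flow level is not pinned down. The route I would take to close it is to carry out the whole argument on the range where each $l_e$ is strictly increasing, so that the second alternative cannot occur and the per-edge flows coincide; summing $l_e(x_e)x_e = l_e(x'_e)x'_e$ over all edges then gives $C(x)=C(x')$.
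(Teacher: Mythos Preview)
The paper does not actually prove this fact; it is merely quoted from the literature, so there is no ``paper's own proof'' to compare against. That said, your proposal is worth commenting on because the per-edge identity $l_e(x_e)x_e=l_e(x'_e)x'_e$ is in fact \emph{false} under the paper's own hypotheses, and you have essentially put your finger on why. Take two parallel edges with the constant latency $l_1\equiv l_2\equiv 1$ (non-negative, non-decreasing, $C^1$, convex) and unit demand: every split $(x_1,x_2)$ with $x_1+x_2=1$ is a Wardrop equilibrium, yet $l_1(x_1)x_1=x_1$ varies with the split. So the ``flat case'' you flagged is not a gap to be closed but a genuine counterexample to the per-edge claim; your proposed fix of assuming strict monotonicity changes the hypotheses rather than proving the statement as written.

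What \emph{is} true, and what the cited literature proves, is the weaker per-edge conclusion $l_e(x_e)=l_e(x'_e)$, which your variational-inequality argument already establishes cleanly (from $(x_e-x'_e)(l_e(x_e)-l_e(x'_e))=0$ one gets equality of latencies in both alternatives). The equality of the \emph{total} social cost then follows without any strictness assumption: substituting $l_e(x'_e)=l_e(x_e)$ into the two variational inequalities $\sum_e l_e(x_e)(x'_e-x_e)\ge 0$ and $\sum_e l_e(x'_e)(x_e-x'_e)\ge 0$ gives $C(x')\ge C(x)$ and $C(x)\ge C(x')$, hence $C(x)=C(x')$. So your machinery is the right one; the target just has to be the aggregate cost rather than the edge-by-edge product.
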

\begin{fact} \label{opt} A flow vector $x$ in Wardrop equilibrium with respect to the marginal cost latencies 
$l_e(x_e)+l'_e(x_e)\cdot x_e$ has a social cost $C(x)$ with respect to the latency functions $l_e$  which is optimal. \end{fact}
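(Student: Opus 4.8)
The plan is to recognize the social cost $C$ as a convex potential whose marginal cost on each edge is exactly $\hat{l}_e$, and then to show that the Wardrop condition for the $\hat{l}_e$ is precisely the first-order optimality condition for minimizing $C$ over the convex set of feasible flows.

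First I would observe that the per-edge cost $c_e(z)=l_e(z)\cdot z$ satisfies $c'_e(z)=l_e(z)+l'_e(z)\cdot z=\hat{l}_e(z)$. Since $l_e$ is convex, $\hat{l}_e$ is non-decreasing (as already used in the proof of Lemma~\ref{equiv}), so $c_e$ is a continuously differentiable function whose derivative is non-decreasing, hence convex. Consequently $C(x)=\sum_e c_e(x_e)$ is a convex function of the path-flow vector $x$, while the set of feasible flows (those with $\sum_{S\in{\cal S}_i}x_i^S=d_i$ and $x_i^S\geq 0$) is a convex polytope. For such a convex program the variational condition $\sum_{i,S}\bigl(\partial C/\partial x_i^S\bigr)(y_i^S-x_i^S)\geq 0$ for all feasible $y$ is both necessary and sufficient for $x$ to be a global minimizer.

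Next I would compute the gradient. Because $x_e=\sum_i\sum_{\{S:e\in S\}}x_i^S$, the chain rule gives $\partial C/\partial x_i^S=\sum_{e\in S}c'_e(x_e)=\sum_{e\in S}\hat{l}_e(x_e)$; that is, the marginal effect of path $S$ is exactly its total marginal-cost latency. It then remains to identify the variational inequality with the Wardrop condition for $\hat{l}_e$. For the forward direction, assuming the Wardrop condition, every used path of a type $i$ is a minimizer, so all used paths share a common minimal $\hat{l}$-length $L_i$ while every other path has length at least $L_i$; summing against any feasible $y$ yields $\sum_{i,S}(\sum_{e\in S}\hat{l}_e)\,y_i^S\geq\sum_i L_i d_i=\sum_{i,S}(\sum_{e\in S}\hat{l}_e)\,x_i^S$, which is precisely the variational inequality. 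For the converse, if the Wardrop condition failed for some type $i$ with a used path $S_1$ strictly longer (in $\hat{l}$-length) than an alternative $S_2$, then shifting an infinitesimal amount of flow from $S_1$ to $S_2$ would strictly decrease the inner product, contradicting optimality.

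The main obstacle is the careful two-way translation between the variational inequality and the path-based equilibrium inequalities, together with the justification that convexity upgrades the first-order condition from necessary to sufficient; the convexity of $c_e$ is the linchpin and follows cleanly from the monotonicity of $\hat{l}_e$ already established. Assembling these steps shows that a flow is a Wardrop equilibrium for the marginal-cost latencies if and only if it minimizes $C$, which is exactly the asserted optimality.
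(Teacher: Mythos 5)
Your proof is correct, but note what you are comparing against: the paper does not prove this statement at all --- it is quoted as Fact~\ref{opt}, a ``well known fact derived from the literature,'' with a citation to Roughgarden's book. What you have supplied is precisely the classical Beckmann--McGuire--Winsten argument that underlies that citation: recognize $C(x)=\sum_e c_e(x_e)$ with $c_e(z)=l_e(z)z$ as a convex potential, note $c_e'=\hat{l}_e$, and identify the Wardrop conditions for $\hat{l}_e$ with the variational inequality $\nabla C(x)\cdot(y-x)\geq 0$, which by convexity characterizes global minimizers. All the steps check out: the gradient computation $\partial C/\partial x_i^S=\sum_{e\in S}\hat{l}_e(x_e)$ is right, the forward direction (used paths all attain the common minimum $L_i$, so both sides of the inequality collapse to $\sum_i L_i d_i$) is the standard summation trick, and the converse via an infinitesimal flow shift, while not needed for the fact as stated, is a correct bonus. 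One small imprecision, which the paper's own proof of Lemma~\ref{equiv} shares: monotonicity of $\hat{l}_e(z)=l_e(z)+l_e'(z)z$ on $z\geq 0$ does not follow from convexity of $l_e$ alone --- you also need $l_e'\geq 0$ (i.e., $l_e$ non-decreasing) so that the product $l_e'(z)\cdot z$ is a product of non-negative non-decreasing factors; since the model assumes $l_e$ non-decreasing, this is covered by the hypotheses, but you should say so explicitly, as the convexity of $c_e$ is, as you observe, the linchpin of the whole argument.
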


We will now transfer these results in our variable delay batch setting.
\begin{theorem}\label{cumeq}
Under the variable delay mechanism, any batch system has  a uni\-que equilibrium  (as defined in Definition \ref{def:batchequi}). Moreover, 
there is always a suitable batch system whose cost, with respect to the batch system,  when in equilibrium (in the sense of Definition \ref{def:batchequi}) is arbitrarily close to the optimal social cost of the latencies model.
\end{theorem}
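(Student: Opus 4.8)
The plan is to reduce everything to the three stated Facts via Lemma~\ref{equiv}, and then to read the batch cost as a Riemann sum. For the uniqueness assertion I would first invoke Lemma~\ref{equiv}: a flow is in equilibrium with respect to the batch system (Definition~\ref{def:batchequi}) precisely when it is a Wardrop equilibrium for the marginal-cost latencies $\hat l_e(x_e)=l_e(x_e)+l'_e(x_e)x_e$. These $\hat l_e$ are non-negative, continuous and non-decreasing; monotonicity is exactly the observation already used in the proof of Lemma~\ref{equiv}, since $l_e$ non-decreasing gives $l'_e\ge 0$ and convexity of $l_e$ makes $l'_e$ non-decreasing. Hence Fact~\ref{exist} yields existence of a batch equilibrium, while the Fact stating that all Wardrop equilibria share the same per-edge quantity $l_e(x_e)x_e$, now applied to $\hat l_e$, gives $\hat l_e(x_e)x_e=\hat l_e(x'_e)x'_e$ for any two batch equilibria $x,x'$ and every $e$; in particular the equilibrium cost is unique, which settles the first claim.

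The crucial point for the second assertion is that the Wardrop condition for $\hat l_e$ makes no reference to the batch system $N=(N_e)_{e\in E}$ at all. Thus the equilibrium flow $x^{*}$ is one and the same for every choice of $N$, and by Fact~\ref{opt} its cost $C(x^{*})=\sum_e l_e(x^{*}_e)x^{*}_e$ in the original latencies model is already the social optimum. Only the batch cost $\hat C(x^{*})=\sum_e \hat c_e(x^{*}_e)$ attached to this one fixed flow depends on $N$, so it remains to drive $\hat C(x^{*})$ down to $C(x^{*})$ by refining the batches.

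To do this I would recognize $\hat c_e(x_e)=(x_e/N_e)\sum_{b=1}^{N_e}\hat l_e((b/N_e)x_e)$ as the right Riemann sum of $\int_0^{x_e}\hat l_e(z)\,dz$ over $N_e$ equal subintervals. Since $\frac{d}{dz}\bigl(z\,l_e(z)\bigr)=\hat l_e(z)$, this integral equals $l_e(x_e)x_e=c_e(x_e)$. Because $\hat l_e$ is non-decreasing, the right Riemann sum overestimates the integral by at most one staircase step, giving
\[ 0\le \hat c_e(x^{*}_e)-c_e(x^{*}_e)\le \frac{x^{*}_e}{N_e}\bigl(\hat l_e(x^{*}_e)-\hat l_e(0)\bigr)\le \frac{x^{*}_e\,\hat l_e(x^{*}_e)}{N_e}. \]
Summing over the finitely many edges, for any $\eps>0$ I choose each $N_e$ so large that $x^{*}_e\,\hat l_e(x^{*}_e)/N_e<\eps/|E|$, whence $0\le \hat C(x^{*})-C(x^{*})<\eps$ and the equilibrium batch cost lies within $\eps$ of the optimum.

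The step needing the most care is the decoupling in the second paragraph: a priori one might fear that enlarging the $N_e$ to shrink the Riemann-sum gap also shifts the very equilibrium whose cost is being measured, so that no limit is reached. Lemma~\ref{equiv} is precisely what forbids this, since it shows the equilibrium set is independent of $N$. Everything else is the routine monotone Riemann-sum estimate; the only bookkeeping is to keep the \emph{same} $x^{*}$ feeding both the optimal cost $C(x^{*})$ and the batch cost $\hat C(x^{*})$, and to note that continuity and monotonicity of $\hat l_e$ (both inherited from convexity and monotonicity of $l_e$) are what validate the error bound.
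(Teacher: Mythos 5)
Your proposal is correct and follows essentially the same route as the paper's own proof: reduce batch equilibria to Wardrop equilibria for the marginal-cost latencies $\hat l_e$ via Lemma~\ref{equiv}, invoke Facts~\ref{exist}--\ref{opt} for existence, uniqueness, and optimality of $C(x^{*})$, and then observe that $\hat c_e(x_e)$ is a Riemann sum converging to $\int_0^{x_e}\hat l_e(z)\,dz = c_e(x_e)$ as $N_e\to\infty$. Your explicit monotone staircase error bound and the remark that the equilibrium set is independent of $N$ merely make precise what the paper leaves as ``immediate to see,'' so there is no substantive difference in approach.
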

\begin{proof} Indeed by the preceding Facts \ref{exist}--\ref{opt}, and by Lemma \ref{equiv}, it suffices to prove that if a flow $x$ is in  equilibrium with respect to a suitable batch system, then its cost $\hat{C}(x)$ with respect to the batch system is  arbitrarily close  to the social cost $C(x)$ of the latencies model. This however is immediate to see since $$\hat{c}_e(x_e) = (x_e/N_e) \sum_{b=1}^{N_e} \hat{l}_e((b/{N_e})x_e)$$ can be made arbitrarily close to $$\int_0^{x_e} \hat{l}_e(z) dz=l_e(x_e)x_e = c_e(x_e)$$ by choosing for each $e$ a large enough $N_e$. \hfill$\Box$
\end{proof}

\section{Discussion}
Aside from the assumptions on the nature of the individual latency functions, perhaps more interestingly, we require a mechanism that can allocate variable cost at the flow within an edge and also ``enforce'' that anyone attempting to switch strategies is ``penalized'' by getting the worst possible cost on his new strategy.

In the context of road networks, these requirements could be implemented in practice by traffic management tools such as traffic lights and access passes. Especially in a dynamic setting one could implement such mechanism by motivating drivers to not deviate from their strategies with incentives (parking, passes, etc.) that vest over time, as long as the driver's strategy remains unchanged.

For applications in the computer or network domain, such a mechanism could be implemented much easier by priorities at the software/hardware level. For example in job scheduling, operating system priority handling could easily be used while in network scheduling, packet switching at the router level could satisfy our requirements.

\section{Acknowledgements}
We would like to thank George Christodoulou for the very helpful discussions.  

\bibliographystyle{plain}
\bibliography{delays}

\end{document}